\newcommand\K{{\mathbb K}}
\numberwithin{equation}{section}
\begin{document}
\title{Amenability of groups is characterized by Myhill's Theorem}
\author{Laurent Bartholdi\\ \lowercase{\textit{with an appendix by}} Dawid Kielak}
\date{June 1, 2016}
\address{L.B.: D\'epartement de Math\'ematiques et Applications, \'Ecole Normale Sup\'erieure, Paris \textit{and} Mathematisches Institut, Georg-August Universit\"at zu G\"ottingen}
\email{laurent.bartholdi@gmail.com}
\address{D.K.: Fakult\"at für Mathematik, Universit\"at Bielefeld}
\email{dkielak@math.uni-bielefeld.de}

\thanks{This work is supported by the ``@raction'' grant ANR-14-ACHN-0018-01}
\begin{abstract}
  We prove a converse to Myhill's ``Garden-of-Eden'' theorem and
  obtain in this manner a characterization of amenability in terms of
  cellular automata: \emph{A group $G$ is amenable if and only if
    every cellular automaton with carrier $G$ that has gardens of Eden
    also has mutually erasable patterns.}

  This answers a question by Schupp, and solves a conjecture by
  Ceccherini-Silberstein, Mach\`\i\ and Scarabotti.

  An appendix by Dawid Kielak proves that group rings without zero
  divisors are Ore domains precisely when the group is amenable,
  answering a conjecture attributed to Guba.
\end{abstract}
\maketitle

\section{Introduction}
Cellular automata were introduced in the late 1940's by von Neumann as
models of computation and of biological
organisms~\cite{vneumann:generalautomata}. We follow an algebraic
treatment, as in~\cite{ceccherini-coornaert:cag}: let $G$ be a
group. A \emph{cellular automaton} carried by $G$ is a $G$-equivariant
continuous map $\Theta\colon A^G\to A^G$ for some finite set
$A$. Elements of $A^G$ are called \emph{configurations}, and the
action of $G$ on $A^G$ is given by
\[g\cdot \phi=\phi(-\cdot g)\text{ for all }\phi\in A^G,g\in G.\]

One should think of $A$ as the stateset (e.g.\ ``\textsf{asleep}'' or
``\textsf{awake}'') of a microscopic animal; then $A^G$ is the
stateset of a homogeneous swarm of animals indexed by $G$, and
$\Theta$ is an evolution rule for the swarm: it is identical for each
animal by $G$-equivariance, and is only based on local interaction by
continuity of $\Theta$. For example, fixing $f,\ell,r\in G$ the
``front'', ``left'' and ``right'' neighbours, define $\Theta$ by
``sleep if the guy in front of you sleeps, unless both your neighbours
are awake'', or in formul\ae, set for all $\phi\in A^G,g\in G$
\[\Theta(\phi)(g)=\begin{cases}\textsf{asleep} & \text{ if }\phi(f g)=\textsf{asleep}\text{ and }\{\phi(\ell g),\phi(r g)\}\ni\textsf{asleep},\\ \phi(g) & \text{ else.}\end{cases}\]
Generally speaking, the \emph{memory set} of a cellular automaton is
the minimal $S\subseteq G$ such that $\Theta(\phi)(g)$ depends only on
the restriction of $\phi$ to $Sg$, and is finite.

Two properties of cellular automata received particular attention. Let
us call \emph{pattern} the restriction of a configuration to a finite
subset $Y\subseteq G$. On the one hand, there can exist patterns that
never appear in the image of $\Theta$. These are called \emph{Gardens
  of Eden} (GOE), the biblical metaphor expressing the notion of
paradise the universe may start in but never return to.

On the other hand, $\Theta$ can be non-injective in a strong sense:
there can exist patterns $\phi'_1\neq\phi'_2\in A^Y$ such that,
however one extends $\phi'_1$ to a configuration $\phi_1$, if one
extends $\phi'_2$ similarly (i.e.\ in such a way that $\phi_1$ and
$\phi_2$ have the same restriction to $G\setminus Y$) then
$\Theta(\phi_1)=\Theta(\phi_2)$. These patterns $\phi'_1,\phi'_2$ are
called \emph{Mutually Erasable Patterns} (MEP).  Equivalently there
are two configurations $\phi_1,\phi_2$ which differ on a non-empty
finite set and satisfy $\Theta(\phi_1)=\Theta(\phi_2)$. The absence of
MEP is sometimes called
\emph{pre-injectivity}~\cite{gromov:endomorphisms}*{\S8.G}.

Amenability of groups was also introduced by von Neumann, in the late
1920's in~\cite{vneumann:masses}; there exist numerous formulations
(see e.g.~\cite{wagon:banachtarski}), but we content ourselves with
the following criterion due to F\o lner (see~\cite{folner:banach})
which we treat as a definition: \emph{a discrete group $G$ is amenable
  if for every $\epsilon>0$ and every finite $S\subset G$ there exists
  a finite $F\subseteq G$ with $\#(S F)<(1+\epsilon)\#F$.} In words,
there exist finite subsets of $G$ that are arbitrarily close to
invariant under translation.

Cellular automata were initially considered on $G=\Z^n$. Celebrated
theorems by Moore and Myhill~\cites{moore:ca,myhill:ca} prove that, in
this context, a cellular automaton admits GOE if and only if it admits
MEP; necessity is due to Myhill, and sufficiency to Moore. This result
was generalized by Mach\`\i\ and Mignosi~\cite{machi-m:ca} to groups
of subexponential growth, and by Ceccherini-Silberstein, Mach\`\i\ and
Scarabotti~\cite{ceccherini-m-s:ca} to amenable groups.

\noindent Our main result is a converse to Myhill's theorem:
\begin{thm}\label{thm:main}
  Let $G$ be a non-amenable group. Then there exists a cellular
  automaton carried by $G$ that admits Gardens of Eden but no mutually
  erasable patterns.
\end{thm}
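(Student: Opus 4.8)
The plan is to realize the required automaton $\Theta$ as one that ``folds'' configurations along a displacement‑bounded $2$‑to‑$1$ surjection of $G$ onto itself; such surjections exist exactly because $G$ is non‑amenable. The automaton will be non‑surjective, so it admits Gardens of Eden, and pre‑injective, so it admits no mutually erasable patterns. Since the direction ``mutually erasable patterns $\Rightarrow$ Gardens of Eden'' holds on every group, the real content is the coexistence of the two properties \emph{non‑surjective} and \emph{pre‑injective}: $\Theta$ must destroy enough information to miss a pattern, but only information spread over all of $G$, so that a finite modification of a configuration cannot repair it.

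\textbf{Step 1 (the combinatorial shadow of non-amenability).} Since F\o lner's criterion fails there are a finite $S_0\subseteq G$ and $\epsilon>0$ with $\#(S_0F)\ge(1+\epsilon)\#F$ for every finite $F\subseteq G$; iterating gives $\#(S_0^kF)\ge(1+\epsilon)^k\#F$, so $S:=\{1\}\cup S_0^k\cup S_0^{-k}$ satisfies $S=S^{-1}\ni1$ and $\#(SF)\ge 2\#F$ for all finite $F$ once $(1+\epsilon)^k\ge 2$. Form the bipartite graph with parts $G$ and $G\times\{0,1\}$ and an edge $g\sim(h,i)$ whenever $h\in Sg$: it is locally finite, and $\#(SF)\ge 2\#F$ is precisely Hall's condition on both sides, so by the usual compactness argument upgrading the finite marriage theorem to locally finite bipartite graphs it has a perfect matching. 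The matching is the graph of two injections $\iota_0,\iota_1\colon G\to G$ with $\iota_i(h)\in Sh$, disjoint images and $\iota_0(G)\sqcup\iota_1(G)=G$, equivalently of a surjection $\pi\colon G\to G$ with $\#\pi^{-1}(h)=2$ and $\pi(g)\in Sg$ for all $g$. With a little more care one also arranges that $\pi$ has no finite forward orbit, so that its functional graph is a forest.

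\textbf{Step 2 (the automaton; its Gardens of Eden).} Take the alphabet $A=P\times D$, where $D$ is a finite ``data'' alphabet with $\#D\ge 2$ and $P$ is a finite set recording at each vertex $g$ enough combinatorial data to reconstruct, on a bounded neighbourhood of $g$, a candidate for $\pi$: say an element of $S$ in the role of $\pi(g)g^{-1}$, a ``sheet'' label in $\{0,1\}$, and consistency flags. From a configuration $\phi$ the $P$‑part determines, through a bounded window, a displacement‑bounded self‑map $\pi_\phi$ of $G$ and a set of \emph{defective} vertices where the recorded data does not locally look like an honest $\pi$‑structure. Let $\Theta$ copy the $P$‑part verbatim and update the $D$‑part by pulling it back along $\pi_\phi$ at non‑defective vertices (the new datum at $g$ being read from $\pi_\phi(g)$), with a carefully chosen fallback at and near defective vertices. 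Then $\Theta$ is $G$‑equivariant with finite memory set, hence a cellular automaton, and on ``honest'' configurations (those whose $P$‑part really restricts a global $\pi$‑structure) it is simply pullback along the genuine surjection $\pi_\phi$. For a Garden of Eden, observe that whenever $h$ and the other vertex $h'$ of its $\pi_\phi$‑fibre are both non‑defective, $\Theta(\phi)$ has \emph{equal} data at $h$ and $h'$; hence a finite pattern $p$ whose $P$‑part restricts an honest $\pi$‑structure making two vertices a fibre, and whose data at those vertices disagree, can never occur in the image, since any preimage would reproduce that $P$‑part and so would have to satisfy the equality. Thus $\Theta$ is not surjective.

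\textbf{Step 3 (no mutually erasable patterns --- the main obstacle).} Suppose $\phi_1\ne\phi_2$ agree outside a finite nonempty set and $\Theta(\phi_1)=\Theta(\phi_2)$. Comparing $P$‑parts forces them equal, hence a common map $\pi:=\pi_{\phi_1}$ and a common defect set; let $D$ be the finite, nonempty set on which the $D$‑parts differ. The equations $\Theta(\phi_1)=\Theta(\phi_2)$ then say that $D$ avoids the $\pi$‑images of all non‑defective vertices and is ``saturated'' inside the fibre structure of $\pi$. On the non‑defective part, using that $\pi$ is $2$‑to‑$1$ with functional graph a forest, this forces $D$ to be infinite, a contradiction (if $\phi_1,\phi_2$ are honest there is nothing to prove, $\Theta$ being then an injection). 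The genuinely delicate point --- and what I expect to be the hard core of the whole argument --- is to exclude a $D$ sheltered inside a finite defective region: this is exactly what the fallback rule of Step 2 must be engineered to prevent, while remaining simultaneously (i) a function of bounded local data, (ii) unable to erase any finite data pattern, and (iii) compatible with the forbidden pattern of Step 2. This is the precise reversal of the counting (entropy) mechanism in Myhill's theorem, which over an amenable group would instead manufacture mutually erasable patterns. Granting it, $\Theta$ has no mutually erasable patterns, and together with Step 2 this proves Theorem~\ref{thm:main}.
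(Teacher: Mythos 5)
Your Steps 1 and 2 are sound in outline: non-amenability does yield a bounded-displacement $2$-to-$1$ surjection $\pi\colon G\to G$ (Hall's condition on both sides plus a Cantor--Schr\"oder--Bernstein argument for matchings), and encoding a candidate $\pi$-structure in a $P$-layer does produce a $G$-equivariant, finite-memory, non-surjective map. The genuine gap is Step 3: the fallback rule at defective sites is never defined, and the sentence beginning ``Granting it'' concedes precisely the hard core of the theorem rather than proving it. This is not a detail one can leave to engineering, because the obvious fallbacks demonstrably fail: if defective sites simply retain (or locally recompute) their own datum, take an honest $P$-structure and corrupt its $P$-layer on a finite set so that some non-defective site $x$ has both of its $\pi$-preimages inside the defective zone; then altering the data-entry at $x$ alone is invisible in the image (no non-defective site reads from $x$, defective sites do not read from $x$, and defectiveness depends only on the $P$-layer), so mutually erasable patterns appear. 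Any repair must make the data-update pre-injective relative to \emph{every} garbage $P$-background while remaining ``read along $\pi$'' on honest regions, and since $\pi$ cannot be chosen $G$-equivariantly for a general non-amenable group (an equivariant version is essentially Muller's free-subgroup construction recalled in the introduction), you cannot avoid the $P$-layer and its corruptible regions. So as written the proposal proves the theorem only modulo its most difficult step.

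For comparison, the paper avoids the encoding/defect problem altogether: it takes $S=S_0^k$ with $\#(SF)\ge(1+\log\#S)\,\#F$ for all finite $F$, and builds a \emph{linear} automaton $\Theta(\phi)(g)=\sum_{s\in S}\alpha_s(\phi(sg))$ over a finite field, where each $\alpha_s$ maps into a proper coordinate subspace $\K X_s\subset\K Y$. Gardens of Eden then come from $\bigcup_{s\in S}X_s\subsetneqq Y$, and by linearity the absence of mutually erasable patterns reduces to injectivity on finitely supported configurations; this is proved by a weighted counting argument locating a point $f$ of the support at which the sets $X_{s,T_s}$ of ``private'' coordinates satisfy $\sum_s\#X_{s,T_s}\ge\#Y$, using the combinatorial Lemma~\ref{lem:1} (subsets realized as cycles of permutations) together with a generic choice of the $\alpha_s$ over a large field. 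If you wish to pursue your route, you would need to construct and verify a fallback satisfying your conditions (i)--(iii); the mechanism above shows this requires a new idea, not merely care.
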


There is a natural measure, the \emph{Bernoulli measure}, on the
configuration space $A^G$: for every pattern $\phi\in A^Y$ it assigns
measure $1/\#A^{\#Y}$ to the clopen set
$\{\psi\in A^G:\,\psi|Y=\phi\}$. Note that the $G$-action on $A^G$
preserves this measure. Hedlund proved
in~\cite{hedlund:endomorphisms}*{Theorem~5.4}, for $G=\Z$, that a
cellular automaton preserves Bernoulli measure if and only if it has
no GOE. This result was generalized by Meyerovitch to amenable
groups~\cite{meyerovitch:finiteentropy}*{Proposition~5.1}.

Combining these with Theorem~\ref{thm:main} and with the
aforementioned results by Ceccherini-Silberstein \emph{et al.} and the
main result of~\cite{bartholdi:moore}, we deduce:
\begin{cor}\label{cor:main}
  Let $G$ be a group; then the following are equivalent:
  \begin{enumerate}
  \item the group $G$ is amenable;\label{cor:1}
  \item all cellular automata on $G$ that admit MEP also admit
    GOE;\label{cor:2}
  \item all cellular automata on $G$ that admit GOE also admit
    MEP;\label{cor:3}
  \item all cellular automata on $G$ that do not preserve Bernoulli
    measure admit GOE.\qed\label{cor:4}
  \end{enumerate}
\end{cor}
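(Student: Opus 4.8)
My plan is to realize the automaton as a \emph{linear} one over a finite field $\K$. To a matrix $M\in M_n(\K G)$ one attaches the cellular automaton $\Theta_M$ on $(\K^n)^G$ given by convolution with $M$; its memory set is the union of the supports of the entries of $M$. Since $\K^n$ is finite the configuration space is compact, and one checks directly that $\Theta_M$ is pre-injective (has no mutually erasable patterns) precisely when $M$, viewed as an endomorphism of the free $\K G$-module $\K G^n$ of \emph{finitely supported} configurations, is injective. Dually, since $\operatorname{Hom}_\K(-,\K)$ is exact, $\Theta_M$ is surjective (has no Garden of Eden) precisely when the conjugate transpose $M^{\ast}$ — the transpose combined with the anti-involution $g\mapsto g^{-1}$ of $\K G$ — is injective on $\K G^n$. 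So to prove Theorem~\ref{thm:main} it suffices to exhibit a square matrix over $\K G$ that is injective but whose conjugate transpose is not; and padding with rows of zeros (each zero row becomes a zero column of $M^{\ast}$, destroying its injectivity without affecting that of $M$), it is enough to produce integers $p>q\ge 1$ and an injective $\K G$-module map $\K G^{\,p}\hookrightarrow\K G^{\,q}$, i.e.\ to show that $\K G$ fails the rank condition.

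The weight of the theorem thus rests on one algebraic statement: \emph{if $G$ is non-amenable then $\K G$ fails the rank condition for some finite field $\K$.} I would argue by cases according to whether $\K G$ has zero divisors. If it has none — so $G$ is torsion-free and $\K G$ is a domain — the appendix applies: non-amenability forces $\K G$ to be a non-Ore domain, so there are nonzero $a,b$ with $\K G a\cap\K G b=0$, whence $(x,y)\mapsto xa+yb$ is an injection $\K G^{2}\hookrightarrow\K G$; this is the case $(p,q)=(2,1)$. If $\K G$ has zero divisors — in particular whenever $G$ has torsion, after choosing $\K$ so that $\K H$ is semisimple for some finite subgroup $H\le G$ — I would extract the rank failure from a nontrivial idempotent (say $e=\tfrac1{|H|}\sum_{h\in H}h$), passing to the corner ring $e\K Ge$, a Hecke-type algebra that still witnesses the non-amenability of $G$, and transporting the resulting rank failure back to $\K G$. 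In either case the engine is the same combinatorial input: the failure of the F\o lner condition furnishes a finite $S$ with $\#(SF)\ge 2\#F$ for every finite $F$ (replace $S$ by a power), hence by Hall's marriage theorem — the relevant bipartite graph being locally finite — a ``doubling'', a bounded injection $G\sqcup G\hookrightarrow G$, which one turns into a finite matrix presenting a $\K G$-module with more free relations than generators. The transparent model is $G=F_{d}$ with $d\ge 2$: the augmentation ideal is free of rank $d$, so $0\to\K G^{d}\to\K G\to\K\to 0$ already gives $\K G^{d}\hookrightarrow\K G$, and the automaton on $\K^{d}$ attached to the $d\times d$ matrix with first row $(a_1-1,\dots,a_d-1)$ and all other entries zero has Gardens of Eden (the last $d-1$ output tracks vanish) and no mutually erasable patterns (the Cayley graph of $F_d$ is a tree, so $a_1-1,\dots,a_d-1$ are left $\K G$-independent).

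The genuinely hard step is precisely this algebraic input in full generality. A paradoxical decomposition of $G$ lives on \emph{infinite} pieces, while a matrix over $\K G$ is finitely supported, and turning the former into the latter \emph{uniformly} over all non-amenable $G$ — including torsion groups with no free subgroups — is where the real difficulty sits; it is what forces the detour through the appendix in the domain case and through corner rings otherwise. Granting it, the passage back to cellular automata is the duality of the first paragraph, which proves Theorem~\ref{thm:main} and hence the implication $(3)\Rightarrow(1)$ of the Corollary. The remaining implications only assemble quoted results: $(1)\Rightarrow(2)$ and $(1)\Rightarrow(3)$ are the amenable Garden-of-Eden theorem of Ceccherini-Silberstein, Mach\`\i\ and Scarabotti; $(1)\Rightarrow(4)$ is Meyerovitch's extension of Hedlund's theorem; $(2)\Rightarrow(1)$ is the main result of \cite{bartholdi:moore}; and $(4)\Rightarrow(1)$ holds because the automaton of \cite{bartholdi:moore} has mutually erasable patterns but no Garden of Eden, while a measure-preserving automaton is always pre-injective, so that automaton cannot preserve the Bernoulli measure.
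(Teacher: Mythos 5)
Your assembly of the easy implications matches the paper ((1)$\Rightarrow$(2),(3) from \cite{ceccherini-m-s:ca}, (2)$\Rightarrow$(1) from \cite{bartholdi:moore}, (1)$\Rightarrow$(4) from \cite{meyerovitch:finiteentropy}), and the reduction of (3)$\Rightarrow$(1) to linear algebra over $\K G$ --- duality between surjectivity and pre-injectivity, padding an injective $\K G^n\to\K G^{n-1}$ with a zero row --- is sound and close in spirit to the paper. But you never prove the one statement that carries all the weight: that \emph{every} non-amenable $G$ admits such an injection. Your appeal to the appendix is circular within this paper: Kielak's proof that $\K G$ fails the Ore condition for non-amenable $G$ is itself deduced from the construction in the proof of Theorem~\ref{thm:main}, so it cannot serve as input to that theorem; an independent proof of the Ore failure is exactly the open problem being solved here. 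The zero-divisor case (``corner ring $e\K Ge$, a Hecke-type algebra that still witnesses non-amenability, transporting the rank failure back'') is an unargued sketch, and the claim that the Hall-marriage doubling $G\sqcup G\hookrightarrow G$ ``one turns into a finite matrix presenting a $\K G$-module with more free relations than generators'' is precisely the difficulty: the doubling is not $G$-equivariant and yields no module map. The paper's actual mechanism --- Lemma~\ref{lem:1} (the set system of cycles of permutations), the $(1+\log n)$ averaging argument against the F\o lner violation, and the generic choice of the maps $\alpha_s$ over a large finite field so that~\eqref{eq:inj} holds --- has no counterpart in your proposal; you explicitly ``grant'' the hard step, so (3)$\Rightarrow$(1) is not proved.

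The argument you give for (4)$\Rightarrow$(1) is moreover incorrect as stated. Your lemma ``a measure-preserving automaton is always pre-injective'' fails for non-amenable groups, which is exactly where you invoke it: a \emph{linear} surjective cellular automaton always preserves the Bernoulli measure, because for every finite $E\subset G$ the induced local map $A^{SE}\to A^E$ is linear and onto, so all its fibres are cosets of its kernel and have equal cardinality $\#A^{\#SE-\#E}$. By your own duality, the conjugate transpose of the matrix of Theorem~\ref{thm:main} (or of Muller's matrix when $G$ contains a free subgroup) gives a linear automaton that is surjective but not pre-injective; it therefore has mutually erasable patterns, no Gardens of Eden, and \emph{does} preserve the Bernoulli measure. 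Consequently the automaton of \cite{bartholdi:moore} (or any MEP-without-GOE automaton) cannot be assumed to violate measure preservation, and your contradiction with (4) evaporates: to negate (4) for a non-amenable group one must produce a surjective automaton that fails to preserve the measure, which is a genuinely different construction from anything you (or the duality argument) provide.
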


\subsection{Origins}
Schupp had already asked in~\cite{schupp:arrays}*{Question~1} in which
precise class of groups the theorems by Moore and Myhill hold.
Ceccherini-Silberstein \emph{et al.} conjecture
in~\cite{ceccherini-m-s:ca}*{Conjecture 6.2} that
Corollary~\ref{cor:main}(1--3) are equivalent.

The implication~(3$\Rightarrow$1) is the content of
Theorem~\ref{thm:main}. In case $G$ contains a non-abelian free
subgroup, it was already shown by Muller in his University of Illinois
1976 class notes, see~\cite{machi-m:ca}*{page 55}; let us review the
construction, in the special case
$G=\langle x,y,z|x^2,y^2,z^2\rangle$. Fix a finite field $\K$, and set
$A\coloneqq\K^2$. View $A^G$ as $\K^G\times\K^G$, on which $2\times2$ matrices
with co\"efficients in the group ring $\K G$ act from the left. Define
$\Theta\colon A^G\righttoleftarrow$ by
\[\Theta(\phi)=\begin{pmatrix}x & y+z\\ 0 & 0\end{pmatrix}\phi.\]
It obviously has gardens of Eden --- any pattern with non-trivial
second co\"ordinate --- and to show that it has no mutually erasable
patterns it suffices, since $\Theta$ is linear, to show that $\Theta$
is injective on finitely-supported configurations; this is easily
achieved by considering, in the support of a configuration $\phi$, a
position $g\in G$ such that $x g$ and $y g$ don't belong to the support
of $\phi$.

\subsection*{Acknowledgments}
I am very grateful to Tullio Ceccherini-Silberstein and to Alexei
Kanel-Belov for entertaining conversations and encouragement, and to
Dawid Kielak for having contributed an appendix to the text.

\section{Proof of Theorem~\ref{thm:main}}
\noindent We begin with a combinatorial
\begin{lem}\label{lem:1}
  Let $n$ be an integer. Then there exists a set $Y$ and a family of
  subsets $X_1,\dots,X_n$ of $Y$ such that, for all
  $I\subseteq\{1,\dots,n\}$ and all $i\in I$, we have
  \[\#\Big(X_i\setminus\bigcup_{j\in I\setminus\{i\}}X_j\Big)\ge\frac{\#Y}{(1+\log n)\#I}.\]
\end{lem}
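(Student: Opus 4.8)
The plan is to give an explicit construction rather than an existence argument. The idea is a \emph{weighted threshold construction}: equip each point of $Y$ with a linear order on $\{1,\dots,n\}$ and a cutoff $t$, and place a point into $X_i$ exactly when its order ranks $i$ among the top $t$ elements. If the cutoff $t$ is made to occur with frequency proportional to $1/t$, the harmonic number $H_n=\sum_{t=1}^n 1/t\le 1+\log n$ appears naturally in the denominator.

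Concretely, I would set $N=\operatorname{lcm}(1,\dots,n)$ and let $Y$ be the set of triples $(\sigma,t,c)$, where $\sigma$ is a bijection $\{1,\dots,n\}\to\{1,\dots,n\}$ (so $\sigma(i)$ is the rank of $i$), where $t\in\{1,\dots,n\}$, and where $c\in\{1,\dots,N/t\}$; the index $c$ merely produces $N/t$ parallel copies of each cutoff, so that cutoff $t$ carries relative weight $1/t$. Thus $\#Y=n!\,N\,H_n$. Define $X_i=\{(\sigma,t,c)\in Y:\sigma(i)\le t\}$. The desired inequality then becomes a counting statement, which I would verify directly.

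Fix $I$ with $\#I=k$ and $i\in I$. A triple $(\sigma,t,c)$ lies in $X_i\setminus\bigcup_{j\in I\setminus\{i\}}X_j$ precisely when $\sigma(i)\le t<\sigma(j)$ for every $j\in I\setminus\{i\}$; for a fixed $t$ the number of such $\sigma$ is $\binom{n-k}{t-1}\,t!\,(n-t)!$ --- choose the $t-1$ companions of $i$ in the top block among the $n-k$ indices outside $I$, then order the top and bottom blocks (the cases $t-1>n-k$ contribute nothing, as the binomial coefficient records). Multiplying by the $N/t$ choices of $c$ and summing over $t$ gives
\[
  \#\Bigl(X_i\setminus\bigcup_{j\in I\setminus\{i\}}X_j\Bigr)=N\,n!\sum_{t=1}^n\frac1t\,\frac{\binom{n-k}{t-1}}{\binom nt}.
\]
I would then simplify using $\frac1t\binom{n-k}{t-1}=\frac1{n-k+1}\binom{n-k+1}{t}$ together with the classical evaluation $\sum_{t=1}^m\binom mt/\binom nt=\frac{m}{\,n+1-m\,}$ (an immediate consequence of the hockey-stick identity), applied with $m=n-k+1$; the sum collapses to $n!\,N/k$. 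Since $\#Y=n!\,N\,H_n$ and $H_n\le 1+\log n$, the left-hand side equals $\#Y/(H_n k)\ge \#Y/((1+\log n)k)$, which is the claim.

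The main obstacle is entirely one of discovery: guessing that the cutoffs should be weighted like $1/t$, and spotting the two binomial identities that make the sum telescope to exactly $\#Y/(H_n k)$. Once these are in place the argument is mechanical, the bound comes out sharp (with $H_n$ rather than the weaker $1+\log n$), and no probabilistic input is needed. I would finish by sanity-checking the extreme cases $k=1$ --- which recovers ``$\#X_i$ is large'', since $\#X_i=\#Y/H_n$ --- and $k=n$, where only the cutoff $t=1$ contributes.
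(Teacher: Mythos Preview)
Your argument is correct; the threshold construction and the two binomial identities do exactly what you claim, and you end up with the sharp value $\#X_{i,I}=\#Y/(H_n\,\#I)$, just as the paper does.

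The paper's construction is different on the surface but equivalent underneath. It takes $Y$ to be the set of cycles of permutations of $\{1,\dots,n\}$ and lets $X_i$ be the set of cycles containing $i$; then $\#Y=n!\,H_n$ and a short direct count gives $\#X_{i,I}=n!/\#I$. The common mechanism is this: in both proofs a point $y\in Y$ determines a subset $T_y\subseteq\{1,\dots,n\}$ (your top-$t$ block, the paper's cycle support) with $y\in X_i\iff i\in T_y$, and in both cases each $t$-element subset arises with multiplicity proportional to $1/t$. In the paper the factor $1/t$ falls out of the quotient by cycle-equivalence; in your version you insert it by hand via the $N/t$ parallel copies, which makes your $Y$ exactly $N$ disjoint copies of the paper's $Y$ at the level of the incidence structure. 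What you gain is a transparent ``weighted threshold'' picture that explains \emph{why} $H_n$ appears; what the paper gains is a shorter computation, since the cycle count $\#X_{i,I}=n!/\#I$ drops out of a single hockey-stick sum rather than the two identities you invoke.
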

\begin{proof}
  We denote by $\sym n$ the symmetric group on $n$ letters.  Define
  \[Y\coloneqq\frac{\{1,\dots,n\}\times\sym n}{(i,\sigma)\sim(j,\sigma)\text{ if $i$ and $j$ belong to the same cycle of }\sigma};
  \]
  in other words, $Y$ is the set of cycles of elements of $\sym
  n$. Let $X_i$ be the natural image of $\{i\}\times\sym n$ in the
  quotient $Y$.

  First, there are $(i-1)!$ cycles of length $i$ in $\sym i$, given by
  all cyclic orderings of $\{1,\dots,i\}$; so there are
  $\binom n i(i-1)!$ cycles of length $i$ in $\sym n$, and they can be
  completed in $(n-i)!$ ways to a permutation of $\sym n$; so
  \begin{equation}\label{eq:lem:1}
    \#Y=\sum_{i=1}^n\binom n i(i-1)!(n-i)!=\sum_{i=1}^n\frac{n!}i\le(1+\log n)n!
  \end{equation}
  since $1+1/2+\dots+1/n\le 1+\log n$ for all $n$.

  Next, consider $I\subseteq\{1,\dots,n\}$ and $i\in I$, and set
  $X_{i,I}\coloneqq X_i\setminus\bigcup_{j\in
    I\setminus\{i\}}X_j$. Then
  $X_{i,I}=\big\{(i,\sigma):(i,\sigma)\nsim(j,\sigma)\text{ for all
  }j\in I\setminus\{i\}\big\}$. Summing over all possibilities for the
  length-$(j+1)$ cycle $(i,t_1,\dots,t_j)$ of $\sigma$ intersecting
  $I$ in $\{i\}$, we get
  \begin{equation}\label{eq:lem:2}
    \begin{split}
      \#X_{i,I}&=\sum_{j=0}^{n-\#I}\binom{n-\#I}j j!(n-j-1)!\\
      &=\sum_{k\coloneqq n-j=\#I}^n(n-\#I)!(\#I-1)!\binom{k-1}{k-\#I}\\
      &=(n-\#I)!(\#I-1)!\binom{n}{n-\#I}=\frac{n!}{\#I}.
    \end{split}
  \end{equation}
  Combining~\eqref{eq:lem:1} and~\eqref{eq:lem:2}, we get
  \[\#X_{i,I}=\frac{n!}{\#I}=\frac{(1+\log n)n!}{(1+\log n)\#I}\ge\frac{\#Y}{(1+\log n)\#I}.\qedhere\]
\end{proof}

Let $G$ be a non-amenable group. To prove Theorem~\ref{thm:main}, we
construct a cellular automaton carried by $G$, with GOE but without
MEP. Since $G$ is non-amenable, there exists $\epsilon>0$ and
$S_0\subset G$ finite with $\#(S_0F)\ge(1+\epsilon)\#F$ for all finite
$F\subset G$. We then have $\#(S_0^k F)\ge(1+\epsilon)^k\#F$ for all
$k\in\N$. Let $k$ be large enough so that
$(1+\epsilon)^k>1+k\log\#S_0$, and set $S\coloneqq S_0^k$ and
$n\coloneqq\#S$. This set $S$ will be the memory set of our
automaton. We then have
\begin{equation}\label{eq:folner}
  \begin{split}
    \#(S F)&\ge(1+\epsilon)^k\#F>(1+k\log\#S_0)\#F\\
    &\ge(1+\log n)\#F\text{ for all finite }F\subset G.
  \end{split}
\end{equation}

Apply Lemma~\ref{lem:1} to this $n$, and identify $\{1,\dots,n\}$ with
$S$ to obtain a set $Y$ and subsets $X_s$ for all $s\in S$. We have
\[\#\bigg(X_s\setminus\bigcup_{t\in
    T\setminus\{s\}}X_t\bigg)\ge\frac{\#Y}{(1+\log n)\#T}\text{ for
    all }s\in T\subseteq S.
\]
Furthermore, since $n\ge2$ these inequalities are sharp; so we may
replace $Y$ and $X_s$ respectively by $Y\times\{1,\dots,k\}$ and
$X_s\times\{1,\dots,k\}$ for some $k$ large enough so that
$\#(X_s\setminus\bigcup_{t\in T\setminus\{s\}}X_t)\ge(\#Y+1)/(1+\log
n)\#T$ holds; and then we replace $Y$ by $Y\sqcup\{\cdot\}$. If for
$T\subseteq S$ and $s\in T$ we define
\[X_{s,T}\coloneqq X_s\setminus\bigcup_{t\in T\setminus\{s\}}X_t,\text{ then }
  \#X_{s,T}\ge\frac{\#Y}{(1+\log n)\#T}\text{ for all }s\in T\subseteq S;\]
and furthermore we have obtained $\bigcup_{s\in S}X_s\subsetneqq Y$.

Let $\K$ be a large enough finite field (in a sense to be precised
soon), and set $A\coloneqq \K Y$. For each $s\in S$, choose a linear
map $\alpha_s\colon A\to\K X_s\subset A$, and for $T\ni s$ denote by
$\alpha_{s,T}\colon A\to\K X_{s,T}$ the composition of $\alpha_s$ with
the co\"ordinate projection $\pi_{s,T}\colon A\to\K X_{s,T}$, in such
a manner that, whenever $\{T_s: s\in S\}$ is a family of subsets of
$S$ with $\sum_{s\in S}\#X_{s,T_s}\ge\#Y$, we have
\begin{equation}\label{eq:inj}
  \bigcap_{s\in S}\ker(\alpha_{s,T_s})=0.
\end{equation}

This is always possible if $\K$ is large enough: indeed write each
$\alpha_s$ as a $\#Y\times\#Y$ matrix and each $\alpha_{s,T}$ as a
submatrix. The condition is then that various vertical concatenations
of submatrices have full rank, and the complement of these conditions
is a proper algebraic subvariety of $\K^{Y\times Y\times S}$ defined
over $\Z$, which is not full as soon as $\K$ is large enough.

Define now a cellular automaton with stateset $A$ and carrier $G$ by
\[\Theta(\phi)(g) = \sum_{s\in S}\alpha_s(\phi(s g)).\]

Clearly $\Theta$ admits gardens of Eden: for every $\phi\in A^G$, we
have $\Theta(\phi)(1)\in\K(\bigcup_{s\in S}X_s)\subsetneqq A$.

To show that $\Theta$ admits mutually erasable patterns, it is enough
to show, for $\phi\in A^G$ non-trivial and finitely supported, that
$\Theta(\phi)\neq0$. Let thus $F\neq\emptyset$ denote the support of
$\phi$. Define $\rho\colon S F\to(0,1]$ by
$\rho(g)\coloneqq1/\#\{s\in S:g \in s F\}$. Now
\[\sum_{f\in F}\Big(\sum_{s\in S}\rho(s f)\Big)=\sum_{g\in S F}\sum_{s\in S:g\in s F}\rho(g)=\sum_{g\in S F}1=\#(S F),
\]
so there exists $f\in F$ with
$\sum_{s\in S}\rho(s f)\ge\#(S F)/\#F\ge1+\log n$
by~\eqref{eq:folner}. For every $s\in S$, set
$T_s\coloneqq\{t\in S:s f\in t F\}$, so $\#T_s=1/\rho(s f)$. We obtain
\begin{align*}
  \sum_{s\in S}\#X_{s,T_s}&\ge\sum_{s\in S}\frac{\#Y}{(1+\log n)\#T_s}\text{ by Lemma~\ref{lem:1}}\\
  &=\sum_{s\in S}\frac{\#Y\rho(s f)}{1+\log n}\ge\#Y,
\end{align*}
so by~\eqref{eq:inj} the map
$A\ni a\mapsto(\alpha_{s,T_s}(a))_{s\in S}$ is injective. Set
$\psi\coloneqq\Theta(\phi)$. Since by assumption $\phi(f)\neq0$, we
get $(\pi_{s,T_s}(\psi(s f)))_{s\in S}\neq0$, so $\psi\neq0$ and we
have proven that $\Theta$ admits no mutually erasable patterns. The
proof is complete.

\appendix
\section{A characterization of amenability via Ore domains, by Dawid Kielak}
Let $A$ be an associative ring without zero divisors, and let us write
$A^*=A\setminus\{0\}$. Recall that $A$ is called an \emph{Ore domain}
if it satisfies Ore's condition: \emph{for every $a\in A,s\in A^*$
  there exist $b\in A,t\in A^*$ with $at=bs$.} It then follows that
$A(A^*)^{-1}$, namely the set of expressions of the form $as^{-1}$
with $a\in A,s\in A^*$ up to the obvious equivalence relation
$as^{-1}=at(st)^{-1}$, is a skew field called $A$'s \emph{classical
  field of fractions}.

A folklore conjecture, sometimes attributed to Victor
Guba~\cite{guba:question}, asserts that group rings satisfy the Ore
condition precisely when the group is amenable. We prove it in the
following form:
\begin{thm}
  Let $G$ be a group, and let $\K$ be a field such that $\K G$ has no
  zero divisors. Then $G$ is amenable if and only if $\K G$ is an Ore
  domain.
\end{thm}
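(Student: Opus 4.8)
The plan is to treat the two implications separately. For ``$G$ amenable $\Rightarrow$ $\K G$ is an Ore domain'' I would argue as follows. Let $a,s\in\K G$ with $s\neq0$; if $a=0$ there is nothing to prove. Choose a finite subset $E\subseteq G$ containing the supports of $a$ and $s$, and use amenability to produce a finite $F\subseteq G$ with $\#(EF)<2\#F$. The $\K$-linear map $\K F\oplus\K F\to\K(EF)$, $(u,v)\mapsto au-sv$, then has source of dimension $2\#F$ and target of dimension $<2\#F$, hence a nonzero kernel; choose $(u,v)\neq0$ in it. Since $\K G$ has no zero divisors, $u=0$ would force $sv=0$ and hence $v=0$; so $u\neq0$ and $au=sv\neq0$ is an Ore relation. (Passing through the anti-automorphism $g\mapsto g^{-1}$ of $\K G$, which interchanges the left and right Ore conditions, one rewrites it in the stated form $at=bs$.) Thus $\K G$ is an Ore domain.

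For the converse I would prove the contrapositive: if $G$ is non-amenable then $\K G$ is not an Ore domain. Suppose first that $\K$ has enough elements for the construction in the proof of Theorem~\ref{thm:main} to be carried out over $\K$ itself (this includes every infinite field). That construction produces, over $\K$, a $\#Y\times\#Y$ matrix $M$ over $\K G$ --- the linear rule of the automaton --- that is injective on $(\K G)^{\#Y}$ (absence of mutually erasable patterns) and all of whose rows outside $W\coloneqq\bigcup_{s\in S}X_s$, a proper subset of $Y$, vanish (existence of a Garden of Eden). Deleting those rows gives an injective $\K G$-linear map $\bar M\colon(\K G)^N\to(\K G)^m$ with $N\coloneqq\#Y$ and $m\coloneqq\#W<N$. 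If $\K G$ were an Ore domain, with classical skew field of fractions $D$, then $\bar M$ would remain injective on $D^N$: for $v\in D^N$ with $\bar Mv=0$, a common right denominator $t\in\K G\setminus\{0\}$ of the coordinates of $v$ writes $v=wt^{-1}$ with $w\in(\K G)^N$, so $0=\bar M(wt^{-1})=(\bar Mw)t^{-1}$, so $\bar Mw=0$, so $w=0$ and $v=0$. But an injective $D$-linear map $D^N\to D^m$ cannot exist when $N>m$; contradiction.

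Finally, to remove the size hypothesis on $\K$, let $\K$ be a finite field too small for the genericity step in Theorem~\ref{thm:main}. I would carry out that construction over a finite extension $\K'$ of $\K$ with $\#\K'$ large enough --- the construction never uses that $\K'G$ is a domain --- obtaining an injective $\K'G$-linear map $(\K'G)^N\to(\K'G)^m$ with $m<N$. Since $\K'$ is finite over $\K$ and lies in the centre of $\K'G$, the latter is a free $\K G$-module of rank $d\coloneqq[\K':\K]$, so by restriction of scalars our map becomes an injective $\K G$-linear map $(\K G)^{dN}\to(\K G)^{dm}$ with $dm<dN$, to which the skew-field argument applies unchanged. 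The one genuine obstacle I anticipate is precisely this transfer of Theorem~\ref{thm:main} to the prescribed base field; granting it, both implications are routine, the substance of the theorem residing in Theorem~\ref{thm:main} itself.
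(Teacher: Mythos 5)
Your proposal is correct and follows essentially the same route as the paper's appendix: Tamari's F\o lner counting argument for ($\Rightarrow$), and for ($\Leftarrow$) the matrix coming from the construction of Theorem~\ref{thm:main} with its zero rows deleted, restriction of scalars along a finite extension of $\K$, and a dimension contradiction over the classical field of fractions. The only cosmetic difference is that you clear denominators by hand (via a common right denominator) where the paper invokes flatness of the Ore localization, and you check explicitly that the denominator in the Ore relation is nonzero --- a detail the paper leaves implicit.
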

\begin{proof}
  $(\Rightarrow)$ is due to Tamari~\cite{tamari:folner}; we repeat it
  for convenience.  Assume that $G$ is amenable, and let
  $a\in\K G,s\in(\K G)^*$ be given. Let $S\subseteq G$ be a finite set
  containing the supports of $a$ and $s$. By F\o lner's criterion,
  there exists $F\subseteq G$ finite such that $\#(SF)<2\#F$. Consider
  $b,t\in\K F$ as variables; then the equation system $as=bt$ is
  linear, has $2\#F$ unknowns, and at most $\#(SF)$ equations, so has
  a non-trivial solution.

  $(\Leftarrow)$ Assume that $G$ is non-amenable. The construction in
  the proof of Theorem~\ref{thm:main} yields a finite extension
  $\mathbb L$ of $\mathbb K$ and an $n \times n$ matrix $M$ over
  $\mathbb L G$ such that multiplication by $M$ is an injective map
  $(\mathbb L G)^n\righttoleftarrow$ and $M$'s last row consists
  entirely of zeros. Forgetting that last row and restricting scalars,
  namely writing $\mathbb L=\mathbb K^d$ qua $\K$-vector space, we
  obtain an exact sequence of free $\K G$-modules
  \begin{equation}\label{main eqn}
    0 \longrightarrow (\K G)^{dn} \longrightarrow (\K G)^{d(n-1)}.
  \end{equation}

  Suppose now that $\K G$ is an Ore domain, with classical field of
  fractions $\mathbb F$. Crucially, $\mathbb F$ is a flat $\K G$
  module, that is the functor $-\otimes_{\K G} \mathbb F$ preserves
  exactness of sequences (see
  e.g.~\cite{mcconnell-robson:nnr}*{Proposition~2.1.16}).  Also,
  $\mathbb F$ is a skew field, and upon tensoring \eqref{main eqn}
  with $\mathbb F$ we obtain an exact sequence
  \[0 \longrightarrow \mathbb F^{dn} \longrightarrow \mathbb F^{d(n-1)}\]
  which is impossible for reasons of dimension.
\end{proof}

\begin{bibsection}
\begin{biblist}
\bib{bartholdi:moore}{article}{
  author={Bartholdi, Laurent},
  title={Gardens of Eden and amenability on cellular automata},
  journal={J. Eur. Math. Soc. (JEMS)},
  volume={12},
  date={2010},
  number={1},
  pages={241\ndash 248},
  issn={1435-9855},
  review={\MR {2578610 (2011e:05282)}},
  doi={10.4171/JEMS/196},
  eprint={arXiv:math/0709.4280},
}

\bib{ceccherini-m-s:ca}{article}{
  author={Ceccherini-Silberstein, Tullio~G.},
  author={Mach{\`\i }, Antonio},
  author={Scarabotti, Fabio},
  title={Amenable groups and cellular automata},
  language={English, with English and French summaries},
  journal={Ann. Inst. Fourier (Grenoble)},
  volume={49},
  date={1999},
  number={2},
  pages={673\ndash 685},
  issn={0373-0956},
  review={\MR {1697376 (2000k:43001)}},
}

\bib{ceccherini-coornaert:cag}{book}{
  author={Ceccherini-Silberstein, Tullio~G.},
  author={Coornaert, Michel},
  title={Cellular automata and groups},
  series={Springer Monographs in Mathematics},
  publisher={Springer-Verlag},
  place={Berlin},
  date={2010},
  pages={xx+439},
  isbn={978-3-642-14033-4},
  review={\MR {2683112 (2011j:37002)}},
  doi={10.1007/978-3-642-14034-1},
}

\bib{folner:banach}{article}{
  author={F{\o }lner, Erling},
  title={On groups with full Banach mean value},
  journal={Math. Scand.},
  volume={3},
  date={1955},
  pages={243--254},
  issn={0025-5521},
  review={\MR {0079220}},
}

\bib{gromov:endomorphisms}{article}{
  author={Gromov, Mikhael~L.},
  title={Endomorphisms of symbolic algebraic varieties},
  journal={J. Eur. Math. Soc. (JEMS)},
  volume={1},
  date={1999},
  number={2},
  pages={109\ndash 197},
  issn={1435-9855},
  review={\MR {1694588 (2000f:14003)}},
}

\bib{guba:question}{webpage}{
  author={Guba, Victor~S.},
  url={http://aimath.org/WWN/thompsonsgroup/thompsonsgroup.pdf},
  date={2004},
  accessdate={May 31, 2016},
  title={Thompson's group at 40 years. Preliminary problem list},
}

\bib{hedlund:endomorphisms}{article}{
  author={Hedlund, Gustav~A.},
  title={Endormorphisms and automorphisms of the shift dynamical system},
  journal={Math. Systems Theory},
  volume={3},
  date={1969},
  pages={320\ndash 375},
  issn={0025-5661},
  review={\MR {0259881 (41 \#4510)}},
}

\bib{machi-m:ca}{article}{
  author={Mach{\`\i }, Antonio},
  author={Mignosi, Filippo},
  title={Garden of Eden configurations for cellular automata on Cayley graphs of groups},
  journal={SIAM J. Discrete Math.},
  volume={6},
  date={1993},
  number={1},
  pages={44\ndash 56},
  issn={0895-4801},
  review={\MR {1201989 (95a:68084)}},
}

\bib{mcconnell-robson:nnr}{book}{
  author={McConnell, J. C.},
  author={Robson, J. C.},
  title={Noncommutative Noetherian rings},
  series={Pure and Applied Mathematics (New York)},
  note={With the cooperation of L. W. Small; A Wiley-Interscience Publication},
  publisher={John Wiley \& Sons, Ltd., Chichester},
  date={1987},
  pages={xvi+596},
  isbn={0-471-91550-5},
  review={\MR {934572}},
}

\bib{meyerovitch:finiteentropy}{unpublished}{
  author={Meyerovitch, Tom},
  title={Finite entropy for multidimensional cellular automata},
  journal={Ergodic Theory Dynam. Systems},
  volume={28},
  date={2008},
  number={4},
  pages={1243\ndash 1260},
  issn={0143-3857},
  review={\MR {2437229 (2010c:37022)}},
  doi={10.1017/S0143385707000855},
  eprint={arXiv:math.DS/0703167},
}

\bib{moore:ca}{article}{
  title={Machine models of self-reproduction},
  author={Moore, Edward~F.},
  pages={17\ndash 33},
  booktitle={Mathematical problems in the biological sciences. Proc. Sympos. Appl. Math. XIV},
  publisher={Amer. Math. Soc.},
  place={Providence, R.I.},
  date={1962},
  review={\MR {0299409 (45 \#8457)}},
}

\bib{myhill:ca}{article}{
  author={Myhill, John},
  title={The converse of Moore's Garden-of-Eden theorem},
  journal={Proc. Amer. Math. Soc.},
  volume={14},
  date={1963},
  pages={685\ndash 686},
  issn={0002-9939},
  review={\MR {0155764 (27 \#5698)}},
}

\bib{vneumann:masses}{article}{
  author={von Neumann, John},
  title={Zur allgemeinen Theorie des Masses},
  date={1929},
  journal={Fund. Math.},
  volume={13},
  pages={73\ndash 116 and 333},
  note={= \emph {Collected works}, vol.\ I, pages 599\ndash 643},
}

\bib{vneumann:generalautomata}{article}{
  author={von Neumann, John},
  title={The general and logical theory of automata},
  conference={ title={Cerebral Mechanisms in Behavior. The Hixon Symposium}, },
  book={ publisher={John Wiley \& Sons Inc.}, place={New York, N. Y.}, },
  date={1951},
  pages={1\ndash 31; discussion, pp. 32\ndash 41},
  review={\MR {0045446 (13,586a)}},
}

\bib{schupp:arrays}{article}{
  author={Schupp, Paul~E.},
  title={Arrays, automata and groups---some interconnections},
  conference={ title={Automata networks}, address={Argel\`es-Village}, date={1986}, },
  book={ series={Lecture Notes in Comput. Sci.}, volume={316}, publisher={Springer}, place={Berlin}, },
  date={1988},
  pages={19\ndash 28},
  review={\MR {961274}},
}

\bib{tamari:folner}{article}{
    author={Tamari, Dov},
    title={A refined classification of semi-groups leading to generalised polynomial rings with a generalized degree concept},
    year={1954},
    pages={439\ndash 440},
     booktitle={Proc. ICM vol. 3, Amsterdam},
}

\bib{wagon:banachtarski}{book}{
  author={Wagon, Stan},
  title={The Banach-Tarski paradox},
  note={With a foreword by Jan Mycielski; Corrected reprint of the 1985 original},
  publisher={Cambridge University Press},
  place={Cambridge},
  date={1993},
  pages={xviii+253},
  isbn={0-521-45704-1},
  review={\MR {1251963 (94g:04005)}},
}

\end{biblist}
\end{bibsection}
\end{document}